\def\endthebibliography{%
  \def\@noitemerr{\@latex@warning{Empty `thebibliography' environment}}%
  \endlist
}
\newtheorem{lemma}{Lemma}
\newcommand*\diff{\mathop{}\!\mathrm{d}}
\def\BibTeX{{\rm B\kern-.05em{\sc i\kern-.025em b}\kern-.08em
    T\kern-.1667em\lower.7ex\hbox{E}\kern-.125emX}}
\algnewcommand{\Inputs}[1]{%
  \State \textbf{Inputs:}
  \Statex \hspace*{\algorithmicindent}\parbox[t]{.8\linewidth}{\raggedright #1}
}
\algnewcommand{\Initialize}[1]{%
  \State \textbf{Initialize:}
  \Statex \hspace*{\algorithmicindent}\parbox[t]{.8\linewidth}{\raggedright #1}
}
\def\BState{\State\hskip-\ALG@thistlm}
\title{Discriminative Mutual Information Estimation for the Design of Channel Capacity Driven Autoencoders}
\author{\IEEEauthorblockN{Nunzio A. Letizia and Andrea M. Tonello} \IEEEauthorblockA{\textit{Institute of Networked and Embedded Systems, Embedded Communication Systems Lab} \\
\textit{University of Klagenfurt, Austria}\\
Email: \{nunzio.letizia, andrea.tonello\}@aau.at}

}
\begin{document}

\maketitle


\begin{abstract}
The development of optimal and efficient machine learning-based communication systems is likely to be a key enabler of beyond 5G communication technologies. In this direction, physical layer design has been recently reformulated under a deep learning framework where the autoencoder paradigm foresees the full communication system as an end-to-end coding-decoding problem. Given the loss function, the autoencoder jointly learns the coding and decoding optimal blocks under a certain channel model. Because performance in communications typically refers to achievable rates and channel capacity, the mutual information between channel input and output can be included in the end-to-end training process, thus, its estimation becomes essential. 

In this paper, we present a set of novel discriminative mutual information estimators and we discuss how to exploit them to design capacity-approaching codes and ultimately estimate the channel capacity.
\end{abstract}

\begin{IEEEkeywords} 
Channel capacity, deep learning, wireless communications, channel coding, mutual information.
\end{IEEEkeywords}

\maketitle
\section{Introduction}
\label{sec:introduction}
Recent advancements in machine learning for communications have fostered the design of systems with a data-driven paradigm. As a consequence, physical layer design has been reinterpreted using machine learning techniques to improve the coding and decoding scheme performance \cite{Oshea2017, Nachmani2018, Dorner2018, Alberge2019, Stark2019}. In particular, the work in \cite{Oshea2017} introduced the concept of autoencoder-based communication systems. In contrast to traditional bottom-up approaches, the encoder and decoder blocks can be jointly learned during the end-to-end autoencoder training phase \cite{Oshea2017}. The autoencoder is a deep neural network that maps a sequence of input bits $\mathbf{s}$ into a sequence of output bits $\hat{\mathbf{s}}$. The input sequence is generally transformed into data symbols $x$ that are then fed into an intermediate channel layer (or network), which introduces constraints, distortions and uncertainties. Given the intermediate channel model, the autoencoder is typically trained by minimizing the cross-entropy loss function, so that it essentially performs a classification task. However, it is known that training a classifier via cross-entropy often suffers from overfitting issues, especially for large networks \cite{Zhang2017}. Moreover, autoencoders for communications shall consider the channel capacity during the learning process in order to produce optimal channel input samples (latent codes). In this direction, the work in \cite{Letizia2021} proposed to use a mutual information regularizer to control and estimate the amount of information stored in the latent representation. In addition, it proposed to use label smoothing to improve the network decoding ability. In \cite{Letizia2021}, the mutual information was computed with the neural estimator MINE \cite{Mine2018} by inserting an appropriate block in the autoencoder architecture. MINE was also exploited in \cite{Wunder2019} to study optimal coding schemes when no channel model is attainable. Nevertheless, MINE suffers from high bias and variance. Thus, new mutual information estimators have been recently proposed to subvert such limitations \cite{Song2020, Poole2019a, LetiziaNIPS}. Among them, a promising approach is given by the discriminative mutual information estimator (DIME) \cite{LetiziaNIPS}, a neural network that directly estimates the density ratio
\begin{equation}
\label{eq:density_ratio}
R(\mathbf{x},\mathbf{y}) = \frac{p_{XY}(\mathbf{x},\mathbf{y})}{p_X(\mathbf{x})\cdot p_Y(\mathbf{y})}
\end{equation}
instead of the individual densities in \eqref{eq:density_ratio}. Concurrently, it is expected that novel and optimal channel coding techniques based on mutual information learning, estimation and maximization can significantly impact the development of beyond 5G communication technologies.
 
In this paper, inspired by the DIME estimator and the $f$-GAN training objectives \cite{Nowozin2016}, we firstly propose a new family of estimators referred to as $f$-DIME. We secondly propose $\gamma$-DIME, a family of estimators that can be used in the end-to-end autoencoder training process to target the channel capacity. We then include the developed estimators in the capacity-driven autoencoder proposed in \cite{Letizia2021} and evaluate their performance in terms of block-error-rate (BLER) and accuracy of the mutual information estimation w.r.t. to MINE.

The rest of the paper is organized as follows. Section \ref{sec:autoencoders} revisits the autoencoder-based communication systems and discusses the major advantages of using a mutual information regularization term in the loss function. Section \ref{sec:theory} reviews some variational lower bounds on the mutual information and the related estimators. Section \ref{sec:f-DIME} presents a new set of discriminative estimators. Section \ref{sec:results} compares the estimators and illustrates the results. Finally, conclusions are reported.

\section{Capacity-driven Autoencoders}
\label{sec:autoencoders}
Any communication system consists of three main blocks: the transmitter, the channel, and the receiver. The baseband transmitter maps a message $s\in \mathcal{M} = \{1,2,\dots,M\}$ into $n$ complex symbols $x \in \mathbb{C}^n$. The symbols are transmitted over the channel at a rate $R = \log_2(M)/n$ (bits per channel use), typically under a power constraint. The channel introduces impairments and modifies $x$ into a distorted and noisy version $y$. The task of the receiver is to produce an estimate $\hat{s}$ of the original $s$ from the channel output $y$. The idea proposed in \cite{Oshea2017} is to consider the full communication chain as an autoencoder, a single deep neural network optimized in an end-to-end fashion.
The encoder component $F(s;\theta_E)$ acts as a symbol modulator and maps $s$ into the channel input $x$; it is represented by a neural network of parameters $\theta_E$. The channel is implemented with a set of layers that attempts to approximate the conditional transition probability $p_Y(y|x)$. For complex channel models, the generator of GANs \cite{Goodfellow2014, Oshea2018, Righini2019, Letizia2019a} mimics the channel block with a pre-trained stochastic neural network $y = H(x;\theta_H)$ and allows back-propagated gradients. The decoder block $G(y;\theta_D)$ acts as a demodulator and maps the received samples into the estimate of $\hat{s}$ and the associated a posteriori probability $p_{\hat{S}|Y}$; it consists of a neural network of parameters $\theta_D$ with a softmax layer as output. During the training process, the autoencoder parameters $(\theta_E,\theta_D)$ are jointly updated in order to minimize the categorical cross-entropy loss function
\begin{equation}
\mathcal{L}(\theta_E,\theta_D) = \mathbb{E}_{(x,y)\sim p_{XY}(x,y)}\biggl[-\log\bigl(p_{\hat{S}|Y}(\hat{s}|y;\theta_D)\bigr)\biggr].
\end{equation}
The decoder performs a classification task as it tries to estimate the transmitted message among $M$ possibilities. However, the cross-entropy loss function is prone to overfitting issues \cite{Zhang2017} and it does not guarantee any optimality in the code and constellation design. In \cite{Letizia2021}, a twofold solution to mitigate such issues was proposed. Indeed, the label smoothing regularization technique was used to penalize confident estimations and improve the autoencoder generalization ability. To tackle the optimality issue, it was proposed to use a mutual information regularizer term computed between the channel input and output samples. Therein, it has been shown that both regularizers can be viewed as entropy penalty regularizers. The latter, in particular, was introduced to ultimately target the channel capacity. The overall proposed loss function reads as follows
\begin{align}
\label{eq:loss_function}
\hat{\mathcal{L}}(\theta_E, \theta_D) = \; & \mathbb{E}_{(s,y)\sim \hat{p}_{\hat{S}|Y}(\hat{s}|y)\cdot p_{Y}(y|x)}\biggl[-\log\bigl(p_{\hat{S}|Y}(\hat{s}|y;\theta_D)\bigr)\biggr] \nonumber \\
& -\beta I(X;Y),
\end{align}
where $\hat{p}_{\hat{S}|Y}(\hat{s}|y)=(1-\epsilon)\delta_{\hat{S},S}+\epsilon u(\hat{s})$ is the smoothed target distribution, with $u(\hat{s}) = 1/M$ and $\beta, \epsilon$ are positive regularization parameters.

From \eqref{eq:loss_function} it is interesting to notice that in each training iteration, the mutual information back-propagated gradient influences the encoder parameters (thus, the channel input distribution) and the energy of such gradient depends on $\beta$. The cross-entropy gradient, instead, guides both encoder and decoder towards invertible latent representations (in the limit of the channel distortion). Ideally in the non-parametric limit, a minimization of \eqref{eq:loss_function} corresponds to a maximization of the mutual information, thus, to channel capacity for a memory-less channel
\begin{equation}
C =\max_{p_X(x)} I(X;Y),
\end{equation}
where the channel input distribution $p_X(x)$ is subject to given power constraints. Hence, it is clear that an accurate and stable estimation of the mutual information $I(X;Y)$ is instrumental for the design of optimal channel coding and end-to-end communication schemes. Recently, deep neural networks have been leveraged to maximize variational lower bounds on the mutual information \cite{Nguyen2010,Mine2018, Poole2019a, LetiziaNIPS}. In particular, the use of discriminative variational lower bounds offer a promising methodology as they attempt to directly estimate the density ratio in \eqref{eq:density_ratio}.

\section{Discriminative Mutual Information Estimators}
\label{sec:theory}
The mutual information $I(X;Y)$ quantifies the statistical dependence between two random variables, $X$ and $Y$, by measuring the amount of information of $X$ via observation of $Y$. Unlike the correlation, the mutual information measures linear and non-linear associations \cite{Letizia2020} since its estimation requires the knowledge of three distributions
\begin{equation}
I(X;Y) = \mathbb{E}_{(x,y)\sim p_{XY}(x,y)}\biggl[\log\frac{p_{XY}(x,y)}{p_X(x)p_Y(y)}\biggr],
\end{equation}
where $p_X(x)$ and $p_Y(y)$ are the distributions of $X$ and $Y$, respectively, and $p_{XY}(x,y)$ is the joint probability distribution. The presence of these three components renders the mutual information estimation a challenging task and several approaches have been proposed to tackle the problem. Standard and traditional approaches rely on binning, density and kernel estimation \cite{Moon1995} and $k$-nearest neighbours \cite{Kraskov2004}. However, for applications such as channel coding, high-dimensional data is often involved and classic approaches fail to scale. Moreover, they can not easily be integrated in an optimization framework that targets the channel capacity. To overcome such limitations, neural networks can be used to obtain a parametric representations of the variational lower bounds on the mutual information, where the objective of the network is to estimate the density ratio. 

One of the first examples of neural estimators is MINE \cite{Mine2018}. The idea behind MINE is to exploit the Donsker-Varadhan dual representation of the Kullback-Leibler (KL) divergence to produce the bound to optimize
\begin{align}
\label{eq:MINE}
I(X;Y) \geq & I_{MINE}(X;Y) \nonumber \\
= & \sup_{\theta \in \Theta} \mathbb{E}_{(x,y)\sim p_{XY}(x,y)}[T_{\theta}(x,y)]  \nonumber \\
& - \log(\mathbb{E}_{(x,y)\sim p_X(x) p_Y(y)}[e^{T_{\theta}(x,y)}]),
\end{align}
where $\theta \in \Theta$ parameterizes a family of functions $T_{\theta} : \mathcal{X}\times \mathcal{Y} \to \mathbb{R}$ via a deep neural network. MINE is a biased estimator, consequently, the authors suggested to replace the expectation in the denominator of the gradient with an exponential moving average. However, MINE suffers from high-variance estimations and a recent solution named SMILE \cite{Song2020} tries to alleviate such issue by clipping the density ratio between $-e^{-\tau}$ and $e^{\tau}$, with $\tau \geq 0$. SMILE is equivalent to MINE in the limit $\tau \rightarrow +\infty$.

A mutual information unbiased variational lower bound that relies on the $f$-divergence representation was proposed in \cite{Nguyen2010}
\begin{align}
\label{eq:NWJ}
I(X;Y) \geq & I_{NWJ}(X;Y) \nonumber \\
= & \sup_{\theta \in \Theta} \mathbb{E}_{(x,y)\sim p_{XY}(x,y)}[T_{\theta}(x,y)]  \nonumber \\
& -\mathbb{E}_{(x,y)\sim p_X(x) p_Y(y)}[e^{T_{\theta}(x,y)-1}].
\end{align}
where it is easy to verify that $I_{NWJ}\leq I_{MINE}$.

In \cite{LetiziaNIPS}, the discriminative mutual information estimator (DIME) was proposed. It is a neural estimator that mimics the task of the discriminator in GANs \cite{Goodfellow2014}. Indeed, the optimal discriminator in the GAN or $f$-GAN framework can be used either as a direct or indirect mutual information estimator when it is fed with paired and unpaired input data of distribution $p_{XY}$ and $p_X p_Y$, respectively. We refer to as direct estimator, the one where the optimal discriminator $D^*$ is proportional to the density ratio in \eqref{eq:density_ratio}. An example is the d-DIME estimator, a mutual information variational lower bound that reads as follows
\begin{equation}
\label{eq:d_DIME}
I(X;Y) \geq \tilde{I}_{dDIME}(X;Y) = \frac{\mathcal{J}_{\alpha}(D^*)}{\alpha}+1-\log(\alpha),
\end{equation}
where $\mathcal{J}_{\alpha}(D)$, $\alpha>0$, is a value function defined as
\begin{align}
\mathcal{J}_{\alpha}(D) = \; & \alpha \cdot \mathbb{E}_{(x,y) \sim p_{XY}(x,y)}\biggl[\log \biggl(D\bigl(x,y\bigr)\biggr)\biggr] \nonumber \\ 
& +\mathbb{E}_{(x,y) \sim p_{X}(x)p_{Y}(y)}\biggl[-D\bigl(x,y\bigr)\biggr],
\label{eq:discriminator_function}
\end{align}
and
\begin{equation}
\label{eq:optimal_discriminator_2}
D^*(x,y) = \alpha \cdot \frac{p_{XY}(x,y)}{p_{X}(x)\cdot p_Y(y)} = \arg \max_D \mathcal{J}_{\alpha}(D).
\end{equation}
Notice that when the positive function $D(\cdot)$ is parameterized by a neural network, the NWJ estimator can be thought as a special case of d-DIME with $\alpha=1$ and $T_{\theta}(x,y) = \log(D_{\theta}(x,y))+1$.

In the following section we extend DIME, in particular, we introduce $f$-DIME, a discriminative mutual information estimator based on the $f$-divergence measure.   

\section{$f$-DIME}
\label{sec:f-DIME}
In this section, we describe a general methodology to estimate the mutual information by applying the variational representation of $f$-divergence functionals $D_f(P||Q)$. In detail, if the measures $P$ and $Q$ are absolutely continuous w.r.t. $\diff x$ and possess densities $p$ and $q$, then the $f$-divergence reads as follows
\begin{equation}
D_f(P||Q) = \int_{\mathcal{X}}{q(x)f\biggl(\frac{p(x)}{q(x)}\biggr)\diff x},
\end{equation}
where $\mathcal{X}$ is a compact domain and the function $f:\mathbb{R}_+ \to \mathbb{R}$ is convex, lower semicontinuous and satisfies $f(1)=0$. Under these conditions, the authors of \cite{Nguyen2010} (and \cite{Nowozin2016} for the $f$-GAN) exploited the Fenchel convex duality to derive a lower bound on $D_f$
\begin{equation}
\label{eq:f_bound}
D_f(P||Q) \geq \sup_{T\in \mathbb{R}} \biggl\{ \mathbb{E}_{x \sim p(x)} \bigl[T(x)\bigr]-\mathbb{E}_{x\sim q(x)}\bigl[f^*\bigl(T(x)\bigr)\bigr]\biggr\},
\end{equation}
where $T: \mathcal{X} \to \mathbb{R}$ and $f^*$ is the Fenchel conjugate of $f$ defined as
\begin{equation}
f^*(t) := \sup_{u\in \mathbb{R}} \{ ut -f(u)\}.
\end{equation}
Therein, it was shown that the bound in \eqref{eq:f_bound} is tight for optimal values of $T(x)$ and it takes the following form
\begin{equation}
\label{eq:optimal_ratio}
T^*(x) = f^{\prime} \biggl(\frac{p(x)}{q(x)}\biggr),
\end{equation}
where $f^{\prime}$ is the derivative of $f$.

To create a novel neural estimator based on the representation in \eqref{eq:f_bound}, we firstly substitute the mutual information $I(X;Y)$ with its KL divergence representation 
\begin{equation}
I(X;Y) = D_{KL}(p_{XY}||p_X p_Y),
\end{equation}
and by defining 
\begin{align}
\mathcal{J}_{f}(T) = \; & \mathbb{E}_{(x,y) \sim p_{XY}(x,y)}\bigl[T\bigl(x,y\bigr)\bigr] \nonumber \\ 
& -\mathbb{E}_{(x,y) \sim p_{X}(x)p_{Y}(y)}\bigl[f^*\bigl(T\bigl(x,y\bigr)\bigr)\bigr],
\label{eq:value_function_f}
\end{align}
we propose to parametrize $T(x,y)$ with a deep neural network $T_{\theta}$ of parameters $\theta$ and solve with gradient ascent and back-propagation
\begin{equation}
\theta^* = \arg \max_{\theta} \mathcal{J}_f(T_{\theta}).
\end{equation}
The main idea is the following: \textit{it is possible to exploit the relation \eqref{eq:optimal_ratio} to estimate the mutual information}. Indeed, if the derivative of $f$ is invertible, 
\begin{equation}
\bigl(f^{\prime}\bigr)^{-1}\bigl(T_{\theta^*}(x,y)\bigr) = \frac{p_{XY}(x,y)}{p_X(x)p_Y(y)},
\end{equation}
thus, having access to $T_{\theta^*}$ and using the density ratio estimated above, we obtain $f$-DIME as
\begin{equation}
\label{eq:f_dime}
I_{fDIME}(X;Y) = \mathbb{E}_{(x,y) \sim p_{XY}(x,y)}\biggl[ \log \biggl(\bigl(f^{\prime}\bigr)^{-1}\bigl(T_{\theta^{*}}(x,y)\bigr) \biggr) \biggr].
\end{equation}

We argue that any value function $\mathcal{J}_f$ of the form in \eqref{eq:value_function_f}, seen as dual representation of a certain $f$-divergence $D_f$, can be maximized to estimate the mutual information using \eqref{eq:f_dime}. As a notable example, the GAN-based i-DIME estimator in \cite{LetiziaNIPS} can be derived using $f$-DIME with generator 
\begin{equation}
f(u) = u\log u-(u+1)\log(u+1)+\log4
\end{equation}
and $T(x)=\log(D(x))$.

However, only a subset of $\mathcal{J}_f$ can be used inside the capacity-driven autoencoder loss function since a maximization of the mutual information is also required. In particular, only the generator $f$ of the KL divergence is suitable for the mutual information maximization, since in such case
\begin{equation}
\max_{p_X(x)} D_f = \max_{p_X(x)} D_{KL} = \max_{p_X(x)} I(X;Y).
\end{equation}

The generator function of the KL divergence is $f(u) = u\log u$ with conjugate $f^*(t) = e^{t-1}$. It easy to notice that such choice of $f$ coincides with the NWJ estimator or d-DIME when $T(x) = \log(D(x))$ and $\alpha=1$ (see \eqref{eq:d_DIME}). Nevertheless, as shown in \cite{LSGAN}, the value function plays a fundamental role during the training process of a discriminator. Consequently, we combined the approach described for the derivation of $f$-DIME with d-DIME to obtain a family of lower bounds on the mutual information that may result in more robust estimators according to specific case studies.

\begin{lemma}
\label{lemma:Lemma1}
Let $X\sim p_X(x)$ and $Y\sim p_{Y}(y|x)$ be the channel input and output, respectively. Let $D(\cdot)$ be a positive function. If $\mathcal{J}_{\gamma}(D)$, $\gamma>0$, is a value function defined as 
\begin{align}
\mathcal{J}_{\gamma}(D) = \; & \gamma \cdot \mathbb{E}_{(x,y) \sim p_{XY}(x,y)}\biggl[\log \biggl(D\bigl(x,y\bigr)\biggr)\biggr] \nonumber \\ 
& + \mathbb{E}_{(x,y) \sim p_{X}(x)p_{Y}(y)}\biggl[- D^{\gamma}\bigl(x,y\bigr)\biggr],
\end{align}
then
\begin{equation}
I(X;Y) \geq \tilde{I}_{\gamma DIME}(X;Y) = \mathcal{J}_{\gamma}(D^*)+1,
\end{equation}
where
\begin{equation}
\label{eq:optimal_discriminator_gamma}
D^*(x,y) = \biggl(\frac{p_{XY}(x,y)}{p_{X}(x)\cdot p_Y(y)}\biggr)^{1/\gamma} = \arg \max_D \mathcal{J}_{\gamma}(D).
\end{equation}
\end{lemma}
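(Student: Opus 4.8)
The plan is to establish the two assertions of the lemma in sequence: first that $D^*$ is indeed the maximizer of $\mathcal{J}_{\gamma}$, and second that evaluating $\mathcal{J}_{\gamma}$ at this maximizer yields exactly $I(X;Y)-1$, so that the proposed quantity coincides with the mutual information at the optimum and lower-bounds it for any suboptimal choice of $D$.

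First I would determine the optimal discriminator by pointwise optimization. Since $\mathcal{J}_{\gamma}(D)$ is an integral over $(x,y)$ of a quantity depending only on the value $D(x,y)$ at that point, with no constraint coupling $D$ across different points, maximizing over the whole positive function $D$ reduces to maximizing the integrand independently at each $(x,y)$. Writing $p = p_{XY}(x,y)$, $q = p_X(x)p_Y(y)$, and $d = D(x,y)$, the integrand is $g(d) = \gamma p \log d - q d^{\gamma}$. Setting $g'(d) = \gamma p / d - \gamma q d^{\gamma-1} = 0$ gives $d^{\gamma} = p/q$, i.e. the claimed $D^*(x,y) = \bigl(p_{XY}(x,y)/(p_X(x)p_Y(y))\bigr)^{1/\gamma}$. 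To confirm this is the global maximum I would note that $g(d)\to -\infty$ both as $d\to 0^+$ and as $d\to\infty$ (the $\log$ and the $-q d^{\gamma}$ terms dominating, respectively), while the second derivative at the critical point evaluates to $g''(d^*) = -\gamma^2 p / (d^*)^2 < 0$; hence the unique stationary point is the maximizer.

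Next I would substitute $D^*$ back into $\mathcal{J}_{\gamma}$. Denoting the density ratio by $R = p_{XY}/(p_X p_Y)$, the identities $\gamma \log D^* = \log R$ and $(D^*)^{\gamma} = R$ follow at once from the form of $D^*$. The first expectation then becomes $\mathbb{E}_{p_{XY}}[\log R] = I(X;Y)$ by the definition of mutual information, while the second becomes $\mathbb{E}_{p_X p_Y}[R] = \int p_{XY}(x,y)\,\diff x\,\diff y = 1$ by normalization of the joint density. Hence $\mathcal{J}_{\gamma}(D^*) = I(X;Y) - 1$, and therefore $\tilde{I}_{\gamma DIME}(X;Y) = \mathcal{J}_{\gamma}(D^*) + 1 = I(X;Y)$.

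Finally, since $D^*$ maximizes $\mathcal{J}_{\gamma}$, for any admissible positive $D$ we have $\mathcal{J}_{\gamma}(D) \leq \mathcal{J}_{\gamma}(D^*) = I(X;Y) - 1$, which rearranges to $\mathcal{J}_{\gamma}(D) + 1 \leq I(X;Y)$ and establishes the variational lower bound. I expect the only delicate step to be the justification of the pointwise optimization in the first part: one must argue that the supremum of the functional over all positive functions equals the integral of the pointwise suprema. This is the standard calculus-of-variations argument used in the $f$-GAN derivations, and it holds precisely because the objective decouples across $(x,y)$ with no joint constraint on $D$, so that interchanging the supremum and the integral is valid.
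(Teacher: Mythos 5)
Your proof is correct, but it takes a genuinely different route from the paper. The paper derives the bound from the Fenchel-duality representation of $f$-divergences: it introduces the scaled generator $f(u) = \frac{u}{\gamma}\log u$ (so that $D_{KL} = \gamma D_f$), computes the conjugate $f^*(t) = e^{\gamma t - 1}/\gamma$, substitutes into the variational bound \eqref{eq:f_bound}, reads off the optimal $T^*(x) = \frac{1}{\gamma}\log\bigl(p(x)/q(x)\bigr) + \frac{1}{\gamma}$ from the tightness condition \eqref{eq:optimal_ratio} of \cite{Nguyen2010}, and then performs the change of variable $T(x) = \log(D(x)) + 1/\gamma$ to arrive at $\mathcal{J}_{\gamma}$ and $D^*$. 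You instead bypass the duality machinery entirely and optimize the integrand $g(d) = \gamma p \log d - q d^{\gamma}$ pointwise, checking boundary behavior and the second-order condition $g''(d^*) = -\gamma^2 p/(d^*)^2 < 0$, then substituting back to get $\mathcal{J}_{\gamma}(D^*) = I(X;Y) - 1$ exactly. What the paper's route buys is structural: it exhibits $\gamma$-DIME as an instance of the general $f$-DIME construction of Section \ref{sec:f-DIME}, which is the conceptual point of the paper. What your route buys is self-containedness and a sharper conclusion stated explicitly: you prove tightness (equality at $D^*$) directly rather than inheriting it from the cited result, which makes clear that $\tilde{I}_{\gamma DIME}$ is not merely a lower bound but attains $I(X;Y)$ at the optimum. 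One small point you rightly flag and could tighten: the sup-integral interchange needs the pointwise maximizer to be a measurable selection, which holds here since $d^* = (p/q)^{1/\gamma}$ is an explicit measurable function; also, at points where $p_{XY} = 0$ but $p_X p_Y > 0$ the pointwise supremum is approached only as $d \to 0^+$ and is not attained by a positive $D$, a measure-zero technicality affecting attainment of the argmax but not the validity of the bound — the paper glosses over this as well.
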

\begin{figure}
\centering
  	\includegraphics[scale = 0.255]{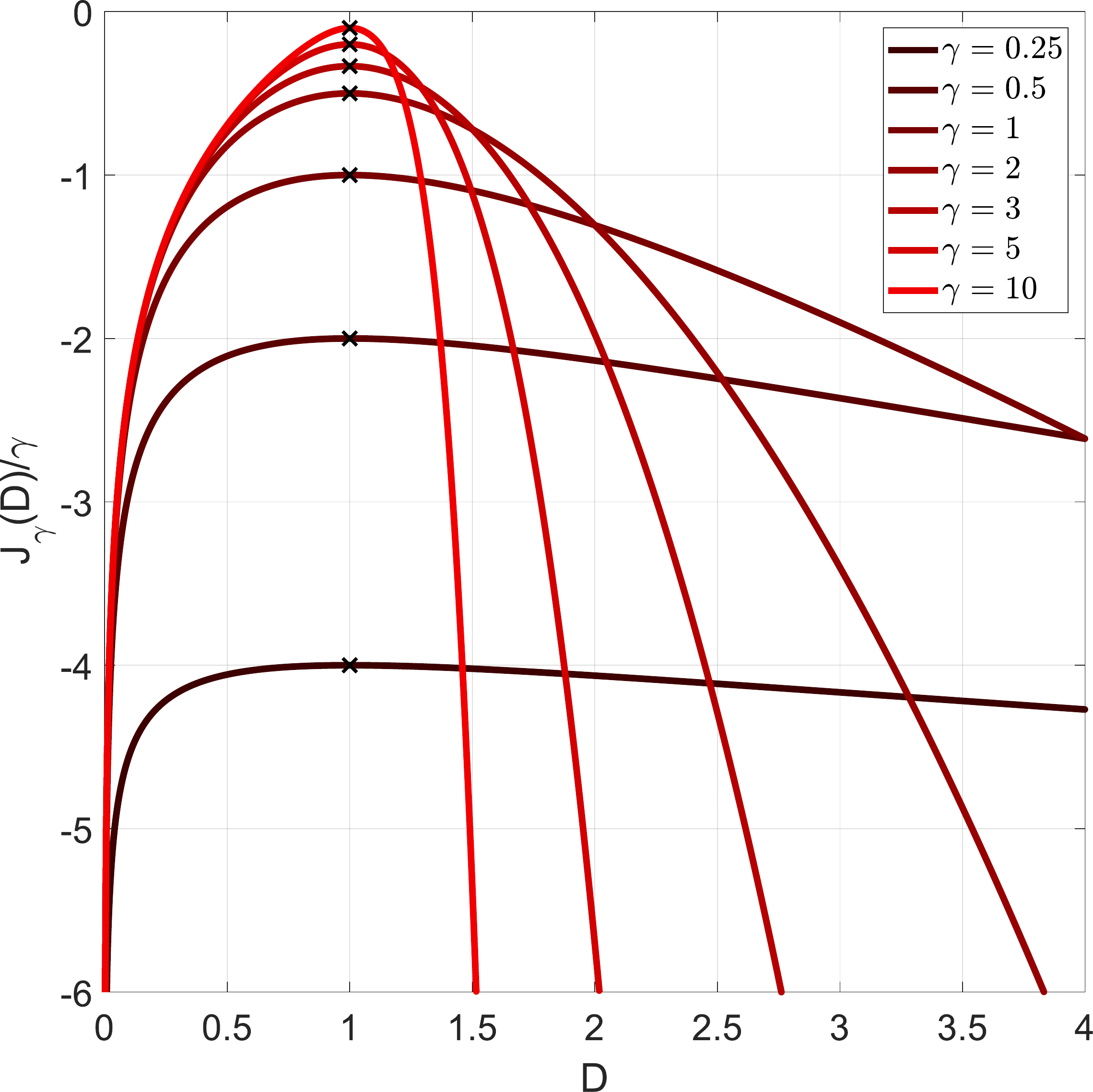}
  	\caption{Behaviour of the functional $J_{\gamma}$ varying the parameter $\gamma$. Concavity of the neighbourhood of the maximal value depends on $\gamma$.}
  	\label{fig:gammaDIME}
\end{figure}

Fig.~\ref{fig:gammaDIME} illustrates how the value function in \eqref{eq:value_function_f} (divided by $\gamma$) behaves according to the choice of the positive parameter $\gamma$. The position of the maximum depends on the density ratio as described in \eqref{eq:optimal_discriminator_gamma} ($R = 1$ in the figure). The role of $\gamma$ is to modify the concavity of the value function to improve the convergence rate of the gradient ascent algorithm. Moreover, when $D$ is parameterized by a neural network, the choice of the learning rate and activation functions may be related with $\gamma$. Indeed, notice that $\gamma$-DIME can also be derived from d-DIME using the substitution $D(x,y) \rightarrow D^{\gamma}(x,y)$.

\section{Results}
\label{sec:results}
In the experimental results, we consider the transmission of $M$ messages at rate $R$ over an AWGN channel, for which we know the channel capacity under an
average power constraint. The choice of AWGN channel shall not be seen as a limitation since the formulation transcends the channel characteristics. Moreover, it provides a reference curve for the validation of the estimators and it allows comparisons across several application domains \cite{Mine2018, LetiziaNIPS}. 

We describe two scenarios: a) the transmission at a rate $R>1$ and short code-length $n$ and b) the transmission at a low rate with larger code-length. For both scenarios, we train a rate-driven autoencoder (see \cite{Letizia2021}) and compare its performance for different estimators in terms of BLER and accuracy of the mutual information estimation. 

For all the experiments and as a proof of concept, we use simple encoder, decoder and neural mutual information estimator architectures. In particular, the encoder is a shallow neural network with $M$ neurons in the input layer, $M$ in the hidden one and $2n$ neurons at the output while the decoder is its complementary.  
The discriminative estimator possesses instead a fixed architecture, a
two layer multilayer perceptron neural network of $200$
neurons in each layer and LeakyReLU with a negative slope coefficient of $0.2$ as activation function. The only
difference in the estimators resides in the final layer where we use a linear dense layer for MINE, while a softplus layer for the DIME based estimators, defined as
\begin{equation}
 D^*(x,y; \theta) = \log(1+\exp(W^*_L \cdot z_{L-1}+b^*_L)),
\end{equation}
where $W$ and $b$ are weights and biases, respectively, $z$ is the output of the previous layer and $L$ is the number of layers of the network.
\begin{figure}
\centering

  	\includegraphics[scale=0.245]{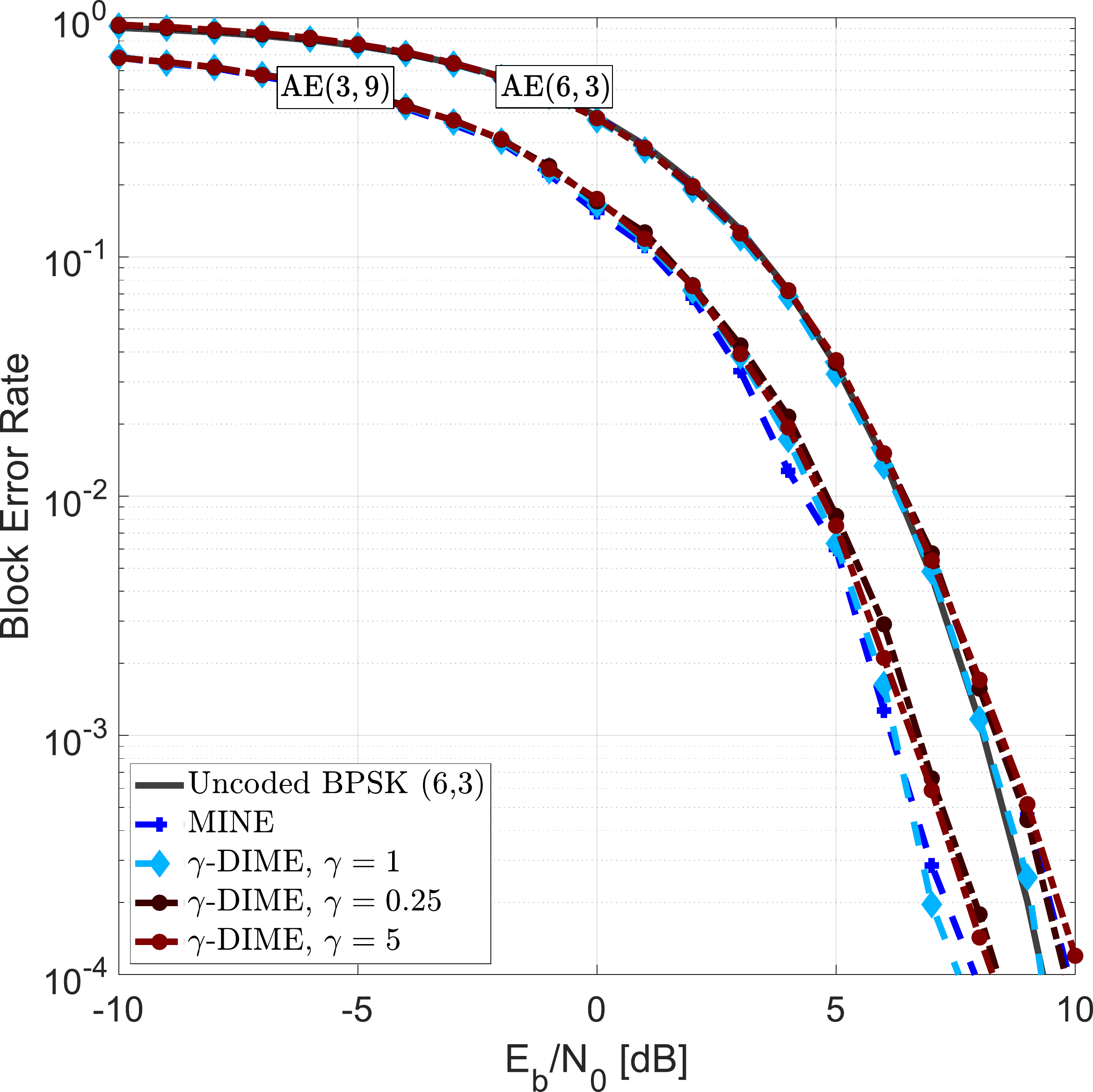}
  	\caption{Comparison of the BLER obtained by the autoencoder AE($3,9$) and AE($6,3$) with an AWGN intermediate layer, using different estimators with $\beta = 0.2$, $\epsilon = 0.2$.}
  	\label{fig:ber_all}
\end{figure}
The training of the autoencoder was conducted at a fixed $E_b/N_0$ ratio of $7$ dB. We minimized the loss function alternating encoder and decoder weights update with the discriminator weights update. The number of iterations was set to $10$k with a learning rate of $0.01$ for both the components. For more implementation details, we refer to the code publicly available. \footnote{\url{https://github.com/tonellolab/capacity-approaching-autoencoders}}

The first scenario describes a digital transmission scheme with input alphabet size $M=64$ over $3$ channel uses. We denote such system as AE($6,3$) since $k = \log_2(M)$. The communication rate is $R=2$. Fig.~\ref{fig:ber_all} shows the autoencoder performance in terms of BLER for different estimators. From the curves we can infer that the $\gamma$-DIME estimator with $\gamma=1$ provides improved performance in terms of BLER compared to both MINE and other values of $\gamma$. The result of Fig.~\ref{fig:MI_6_6} is insightful: the $\gamma$-DIME estimators provide stable estimations also at high SNRs where it is expected that the mutual information saturates to the information rate $R=2$. Thus, the DIME based estimators are more accurate compared to the divergent MINE (see theorem 2 of \cite{Song2020}). It is also interesting to notice that the concavity of $J_{\gamma}$ (shown in Fig.~\ref{fig:gammaDIME}) influences the mutual information estimation as it appears in the results. Indeed, for low values of $\gamma$, fixed learning rate and training iterations, the estimators are trained by moving on the flat curves of Fig.~\ref{fig:gammaDIME}, which is equivalent as using a small learning rate to upgrade the gradients. For completeness and to show the advantage of the proposed estimators under fading, we report in Fig.~\ref{fig:MI_6_6_Ray} the achieved estimated information rate with Rayleigh fading. Also in this case, MINE diverges for high SNR values while the DIME based estimators are much more accurate and tend to saturate to the rate $R=2$. 

\begin{figure}
\centering

  	\includegraphics[scale=0.245]{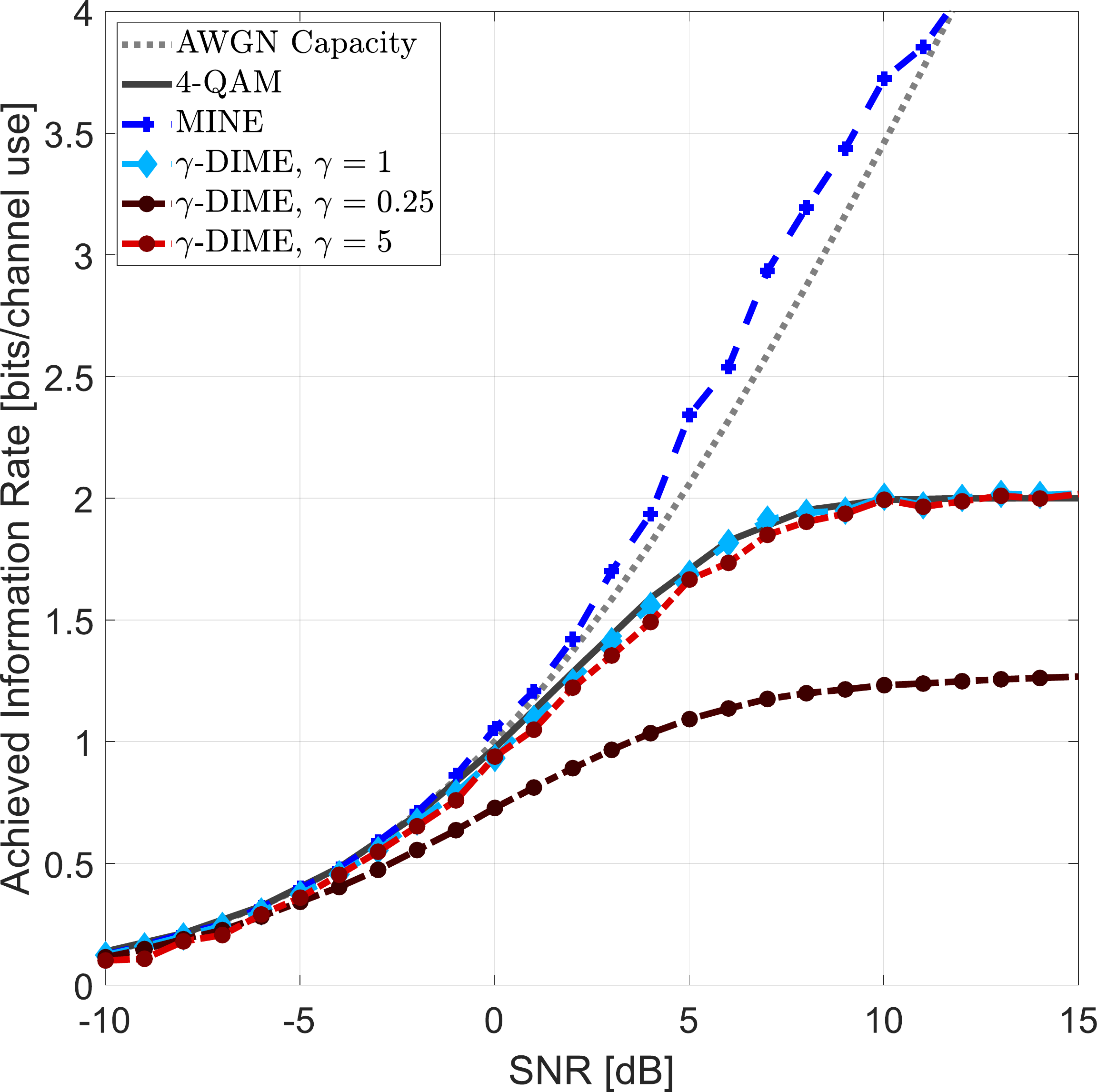}
  	\caption{Comparison of different mutual information estimators for the autoencoder AE($6,3$) with $\beta = 0.2$, $\epsilon = 0.2$ on the AWGN channel.}
  	\label{fig:MI_6_6}
\end{figure}

\begin{figure}
\centering

  	\includegraphics[scale=0.245]{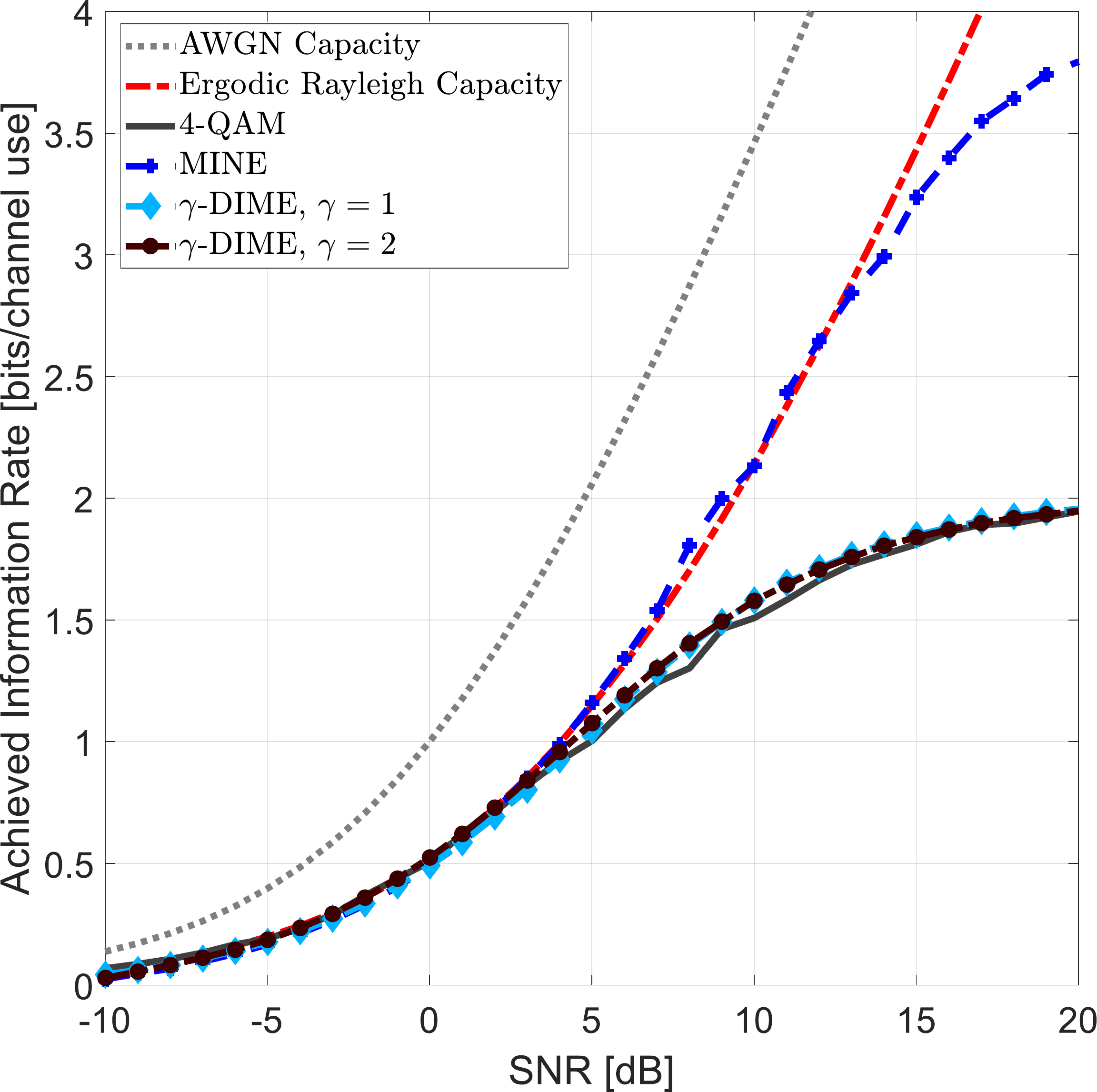}
  	\caption{Comparison of different mutual information estimators for the autoencoder AE($6,3$) with $\beta = 0.2$, $\epsilon = 0.2$ on the Rayleigh fading channel.}
  	\label{fig:MI_6_6_Ray}
\end{figure}

The second scenario describes a digital transmission scheme with input alphabet size $M=8$ over $9$ channel uses. We denote such system as AE($3,9$). The communication rate is $R=1/3$. Fig.~\ref{fig:ber_all} shows that MINE and $\gamma$-DIME with $\gamma=1$ have similar performance in terms BLER under the same training iterations. However, Fig.~\ref{fig:MI_3_18} highlights the fact that even for a low alphabet dimension and for relatively small $n$, the DIME-based estimators are more accurate than MINE since they tend to saturate to $R=1/3$ while MINE overestimates the mutual information. We can conclude that despite MINE being a tighter bound on the mutual information compared to $\gamma$-DIME, the latter and in particular the estimator with $\gamma=1$ performs much better than MINE in terms of mutual information estimation. We leave comments about the architecture to use for future studies, although we remark that it is one of the main aspects in modern machine learning applications and therefore we expect significant improvements. 

\begin{figure}
\centering

  	\includegraphics[scale=0.245]{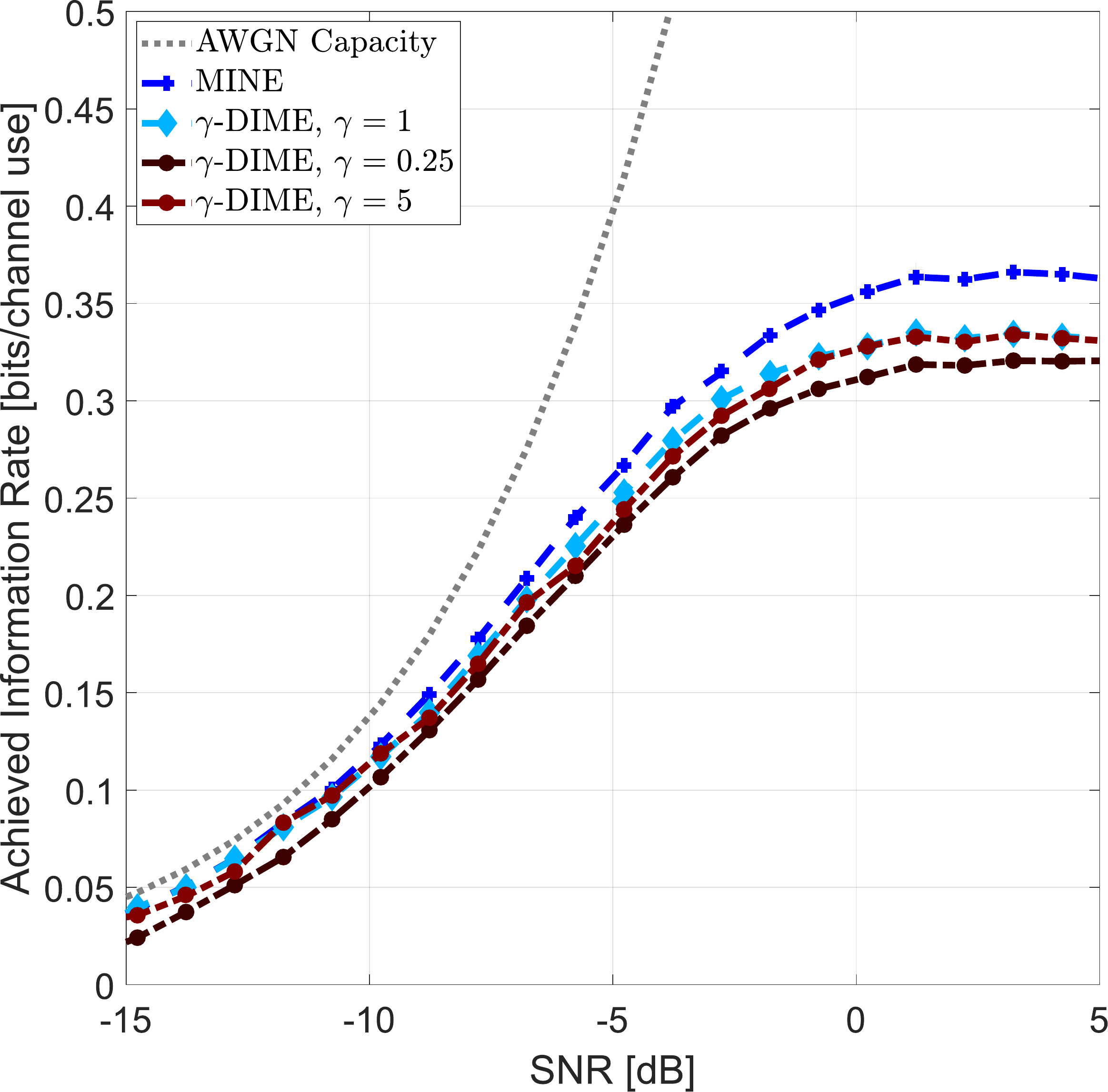}
  	\caption{Comparison of the achieved information rate using different mutual information estimators for the autoencoder AE($3,9$) with $\beta = 0.2$, $\epsilon = 0.2$ on the AWGN channel.}
  	\label{fig:MI_3_18}
\end{figure}

\section{Conclusions}
\label{sec:conclusions}
In this paper, we presented $f$-DIME, a set of novel mutual information estimators based on the $f$-divergence. When $f$ is the function generator of the KL divergence, we obtained as a special case $\gamma$-DIME, a parametric estimator that can be used both for mutual information estimation and design of optimal channel coding. We integrated $\gamma$-DIME as the mutual information regularization term inside the autoencoder loss function. A comparison in terms of estimation accuracy and BLER was done with MINE, an established neural mutual information estimator. Results show the advantage of adopting $\gamma$-DIME over MINE for the autoencoder end-to-end training especially when the aim is to estimate the achieved information rate. 
\appendix
\subsection{Lemmas and Proofs}
\label{sec:omitted_proofs}

 \setcounter{lemma}{0}
 \setcounter{theorem}{0}

\begin{lemma}
Let $X\sim p_X(x)$ and $Y\sim p_{Y}(y|x)$ be the channel input and output, respectively. Let $D(\cdot)$ be a positive function. If $\mathcal{J}_{\gamma}(D)$, $\gamma>0$, is a value function defined as 
\begin{align}
\mathcal{J}_{\gamma}(D) = \; & \gamma \cdot \mathbb{E}_{(x,y) \sim p_{XY}(x,y)}\biggl[\log \biggl(D\bigl(x,y\bigr)\biggr)\biggr] \nonumber \\ 
& + \mathbb{E}_{(x,y) \sim p_{X}(x)p_{Y}(y)}\biggl[- D^{\gamma}\bigl(x,y\bigr)\biggr],
\end{align}
then
\begin{equation}
I(X;Y) \geq \tilde{I}_{\gamma DIME}(X;Y) = \mathcal{J}_{\gamma}(D^*)+1,
\end{equation}
where
\begin{equation}
D^*(x,y) = \biggl(\frac{p_{XY}(x,y)}{p_{X}(x)\cdot p_Y(y)}\biggr)^{1/\gamma} = \arg \max_D \mathcal{J}_{\gamma}(D).
\end{equation}
\end{lemma}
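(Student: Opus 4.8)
The plan is to recognize $\tilde{I}_{\gamma DIME}$ as a disguised instance of the $f$-divergence variational bound \eqref{eq:f_bound}, specialized to a suitably rescaled KL generator, and then to unwind a change of variables that converts the abstract witness function $T$ into the positive discriminator $D$.

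First I would write $I(X;Y) = D_{KL}(p_{XY}\|p_Xp_Y)$ and, abbreviating $p = p_{XY}$ and $q = p_Xp_Y$, choose the scaled generator $f(u) = \frac{u}{\gamma}\log u$. This $f$ is convex, lower semicontinuous on $\mathbb{R}_+$, and satisfies $f(1)=0$, so the hypotheses behind \eqref{eq:f_bound} and \eqref{eq:optimal_ratio} hold; moreover $D_f(p\|q) = \frac{1}{\gamma}D_{KL}(p\|q)$, i.e. $I(X;Y) = \gamma D_f(p\|q)$. A short computation gives the Fenchel conjugate $f^*(t) = \frac{1}{\gamma}e^{\gamma t - 1}$.

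Next I would substitute $f^*$ into \eqref{eq:f_bound}, multiply through by $\gamma$, and perform the change of variable $T(x,y) = \log D(x,y) + \frac{1}{\gamma}$. Under this substitution $e^{\gamma T - 1} = D^{\gamma}$ and $\gamma\,\mathbb{E}_p[T] = \gamma\,\mathbb{E}_p[\log D] + 1$, so the bound collapses exactly to $I(X;Y) \geq \mathcal{J}_\gamma(D) + 1$ for every positive $D$, which is the claimed inequality. The tightness condition \eqref{eq:optimal_ratio}, $T^* = f'(p/q) = \frac{1}{\gamma}\log(p/q) + \frac{1}{\gamma}$, translates under the same change of variable into $D^*(x,y) = (p/q)^{1/\gamma}$, identifying the maximizer. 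As a cross-check I would substitute $D^*$ back into $\mathcal{J}_\gamma$ and verify $\mathcal{J}_\gamma(D^*) = D_{KL}(p\|q) - 1 = I(X;Y) - 1$, so the bound is attained with equality.

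I expect the only delicate point to be the bookkeeping in the change of variable: the offset $\frac{1}{\gamma}$ in $T = \log D + \frac{1}{\gamma}$ is precisely what produces the clean additive constant $+1$ and the exponent-free form $D^* = (p/q)^{1/\gamma}$, and aligning these constants requires care. Everything else — convexity of $f$, the conjugate computation, and attainment — is routine, since the heavy lifting (tightness of the variational bound) is already supplied by \eqref{eq:f_bound} and \eqref{eq:optimal_ratio}. If one preferred a self-contained argument, the maximizer could instead be found by pointwise optimization of the integrand $\gamma p\log D - qD^\gamma$, whose unique critical point $D = (p/q)^{1/\gamma}$ is a global maximum because the integrand tends to $-\infty$ both as $D\to 0^+$ and as $D\to\infty$ for every $\gamma > 0$.
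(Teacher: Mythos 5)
Your proposal is correct and follows essentially the same route as the paper's own proof: the scaled generator $f(u) = \frac{u}{\gamma}\log u$, its conjugate $f^*(t) = \frac{1}{\gamma}e^{\gamma t - 1}$, substitution into the variational bound \eqref{eq:f_bound}, and the change of variable $T = \log D + \frac{1}{\gamma}$ yielding the additive constant $+1$ and the maximizer $D^* = (p/q)^{1/\gamma}$. Your two additions — explicitly verifying attainment by plugging $D^*$ back in, and the self-contained pointwise optimization of $\gamma p \log D - q D^{\gamma}$ — are correct refinements the paper leaves implicit, but they do not change the argument's structure.
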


\begin{proof}
Consider a scaled generator $f(u) = \frac{u}{\gamma}\log u$ and for simplicity of notation, denote $p_{XY}$ and $p_Xp_Y$ with $p$ and $q$, respectively. Then
\begin{equation}
D_{KL}(p||q) = \gamma \int_{x}{q(x) \frac{p(x)}{\gamma q(x)}\log\biggl(\frac{p(x)}{q(x)}\biggr) \diff x},
\end{equation}
and the conjugate $f^*(t)$, with $t \in \mathbb{R}$, is given by
\begin{equation}
f^*(t) = \frac{e^{\gamma t -1}}{\gamma}.
\end{equation}
Substituting in \eqref{eq:f_bound} yields to
\begin{equation}
D_{KL}(p||q) \geq \sup_{T\in \mathbb{R}} \biggl\{ \gamma \mathbb{E}_{x \sim p(x)} \bigl[T(x)\bigr]-\mathbb{E}_{x\sim q(x)}\bigl[e^{\gamma T(x)-1} \bigr]\biggr\}.
\end{equation}
Using \eqref{eq:optimal_ratio} it is easy to verify that the optimal value of $T$ is the log-likelihood ratio rather than the density ratio. Indeed,
\begin{equation}
T^*(x) = \frac{1}{\gamma}\log\biggl(\frac{p(x)}{q(x)}\biggr) + \frac{1}{\gamma}.
\end{equation}
Finally, with the change of variable $T(x)=\log(D(x))+1/{\gamma}$, the optimal discriminator has form
\begin{equation}
D^*(x) = \biggl(\frac{p(x)}{q(x)}\biggr)^{1/\gamma}
\end{equation}
and
{
\small
\begin{equation}
D_{KL}(p||q) \geq 1 + \sup_{D\in \mathbb{R}_+} \gamma \mathbb{E}_{x \sim p(x)} \bigl[\log \bigl( D(x) \bigr) \bigr] \nonumber \\
 -\mathbb{E}_{x\sim q(x)}\bigl[D^{\gamma}(x) \bigr].
\end{equation}
}
Denoting the two terms inside the supremum with $J_{\gamma}(D)$ and replacing $p$ and $q$ with the joint and marginal distributions, concludes the proof.
\end{proof}

\bibliographystyle{IEEEtran}
\bibliography{IEEEabrv,biblio}

\end{document}